\newcommand{\dyad}[1]{\ket{#1}\!\bra{#1}}
\newcommand{\Tr}{\operatorname{Tr}}
\newcommand{\id}{\operatorname{id}}
\newtheorem{theorem}{Theorem}
\newtheorem{lemma}{Lemma}
\newtheorem{observation}{Observation}
\newcommand{\cA}{\mathcal{A}}
\newcommand{\cH}{\mathcal{H}}
\newcommand{\cS}{\mathcal{S}}
\begin{document}
	\title{A simple class of bound entangled states based on the properties of the antisymmetric subspace}
	
	\author{Enrico Sindici}
	\affiliation{SUPA and Department of Physics, University of Strathclyde, Glasgow, G4 0NG, UK}
	\author{Marco Piani}
	\affiliation{SUPA and Department of Physics, University of Strathclyde, Glasgow, G4 0NG, UK}

	\begin{abstract}
		We provide a simple construction of bipartite entangled states that are positive under partial transposition, and hence undistillable. The construction makes use of the properties of the projectors onto the symmetric and antisymmetric subspaces of the Hilbert space of two identical systems. The resulting states can be considered as generalizations of the celebrated Werner states.
	\end{abstract}

	\maketitle
	
	\section{Introduction}
	
	Entanglement~\cite{Horodecki:2009aa} is at the core of quantum information processing~\cite{Nielsen:2011aa}. By considering the physically motivated framework of distant laboratories, where only transformations implemented by local operations and classical communication (LOCC) are allowed, entanglement is elevated to the status of resource. In such a framework, there is great interest in understanding possibilities and limitations in the manipulation of entanglement, in particular with respect to the distillation of noisy entanglement into pure-state entanglement~\cite{Bennett:1996aa}. We know that noisy entangled states that are positive under partial transposition (PPT) cannot be distilled~\cite{Horodecki:1998aa}, and are hence called bound entangled~\footnote{Whether it is only PPT entangled states that cannot be distilled is one of the last major open questions in entanglement theory.}. To focus on such a noisy kind of entanglement is useful and interesting for several reasons. One such reason is that noisy entangled states provide a testbed for entanglement detection methods~\cite{GUHNE20091}, and for the study of the relation between phenomena like entanglement, steering~\cite{Moroder:2014aa}, and non-locality~\cite{Vertesi:2014aa}. From a mathematical standpoint, the study of such states is linked to the study of positive but not completely positive maps~\cite{HORODECKI19961}. Finally, noisy entangled states are also connected to superactivation effects in quantum information~\cite{Smith1812}.
	
	While there are several examples of PPT entangled states in literature (see, e.g., \cite{HORODECKI1997333,Bennett:1999aa,Benatti:2004aa,Piani:2007aa,Sentis:2016aa}), their structure is often relatively complicated, and not amenable to a simple parametrization in terms of a noisy parameter or dimensionality. From a theoretical perspective, one consequence of this is that, when discussing how noise affects tasks and tests that involve entanglement, the analysis of the role of noise is often less comprehensive than it could be. This is because it is customary to focus on exemplary classes of noisy entangled states with a simple structure, like Werner states~\cite{Werner:1989aa} (see Section~\ref{sec:Werner}) or isotropic states~\cite{Horodecki:1999aa}, which do not exhibit PPT entanglement in any range of the parameter involved. From an experimental point of view, it would be convenient to have examples of PPT entangled states with a simple structure, because they could conceivably be implemented more easily in the laboratory (see, e.g., \cite{Amselem:2009aa,Lavoie:2010aa,Barreiro:2010aa,DiGuglielmo:2011aa}), for example to test effects like superactivation~\cite{Smith1812}.
	
	Here we present some tools for the numerical and analytical construction of simple examples of PPT bound entangled states. Such examples are based on the properties of the projections onto the symmetric and antisymmetric spaces of two qudits.

	\section{Entanglement, partial transposition, and bound entangled states}
	
	We recall some basic notions of entanglement theory~\cite{Horodecki:2009aa}.
	
	Consider two finite-dimensional systems $A$ and $B$, with Hilbert spaces $\cH_A \simeq \mathbb{C}^{d_A}$ and $\cH_B \simeq \mathbb{C}^{d_B}$, respectively. The joint Hilbert space is $\cH_{AB} = \cH_A\otimes\cH_B$. A factorized vector state $\ket{\alpha}\ket{\beta}\equiv\ket{\alpha}_A\otimes\ket{\beta}_B$ is called unentangled. Any vector state state $\ket{\psi}\equiv\ket{\psi}_{AB}\in\cH_{AB}$ that is not unentangled is entangled. Any vector state of the joint system can always can be written in the Schmidt decomposition form
	\begin{equation}
	\label{eq:schmidt}
	\ket{\psi} = \sum_i \sqrt{p_i}\ket{a_i}\ket{b_i},
	\end{equation}
	for an appropriate choice of orthonormal bases $\{\ket{a_i}\}$ for $A$ and $\{\ket{b_i}\}$ for $B$, with the $p_i$'s forming a probability distribution. The number of non-zero terms in such a probability distribution, that is, the number of non-zero factorized terms that enter in the Schmidt decomposition, is called the Schmidt rank of $\ket{\psi}$. A general mixed state $\rho$ of $AB$ is a positive-semidefinite unit-trace operator on $\cH_{AB}$, and it can be expressed as convex combination of projectors onto pure states $\dyad{\psi}$:
	\[
	\rho = \sum_k q_k \dyad{\psi_k}.
	\]
	We say that $\rho$ has Schmidt number $m$ if it can be expressed as convex combination of pure states such that each $\ket{\psi_k}$ has at most Schmidt rank $m$, and if any convex combination corresponding to $\rho$ necessarily contains at least one state $\ket{\psi_k}$ with Schmidt rank greater or equal to $m$ (with non-vanishing probability)~\cite{Terhal:2000aa}.
	
	A mixed state is separable or unentangled if it has Schmidt number one, that is, if it can be expressed as
	\begin{equation}
	\label{eq:separable}
	\rho = \sum_k q_k \dyad{\alpha_k}\otimes\dyad{\beta_k}.
	\end{equation}
	Notice that in such a separable expression, the states $\ket{\alpha_k}$ ($\ket{\beta_k}$) do not necessarily correspond to an orthonormal basis for $A$ (for $B$). A mixed state is entangled if it has Schmidt number strictly larger than one, equivalently, if it is not of the form \eqref{eq:separable}. In general, it is hard to determine whether a mixed state is separable or entangled~\cite{GUHNE20091,Horodecki:2009aa}. A simple but powerful test to detect entanglement is given by partial transposition~\cite{Peres:1996aa,HORODECKI19961}: if the state $\rho$ is separable, then the partially transposed state $\rho^{\Gamma_A}=(T_A\otimes \id_B)[\rho]$, where $T$ indicates the transposition operation, is still a positive semidefinite operator; thus if $\rho^{\Gamma_A}$ is not positive semidefinite, then $\rho$ must be entangled. The basis of $A$ in which partial transposition is taken is irrelevant for the sake of the power of the test, because one easily verifies that the spectrum of the partially transposed state does not depend on such a choice. Similarly, one could equivalently apply partial transposition on $B$, because $(\cdot)^{\Gamma_B} = ((\cdot)^{\Gamma_A})^{T_{AB}}$, where $T_{AB}$ is a global transposition that preserves positivity. Hence, in the following, we will indicate the partially transposed state simply by $\rho^\Gamma$, unless further specification is required.
	
	Many protocols in quantum information processing make use of pure-state entanglement, or, even more specifically, of maximally entangled states, where the probability distribution in Eq. \eqref{eq:schmidt} is flat. Since entangled states that are generated between distant locations are rarely of this form, an important process in entanglement manipulation is that of entanglement distillation, where many copies of a mixed entangled state $\rho$ are transformed into many (approximate) copies of a maximally entangled state at some rate. When the rate of conversion is non-zero, we say that the state $\rho$ is distillable, while entangled states such that the rate vanishes are called undistillable. One proves that PPT entangled states are undistillable~\cite{Horodecki:1998aa}.

	\section{Symmetric and antisymmetric subspace}
	
	Let $A$ and $B$ be two $d$-dimensional systems, with total Hilbert space $\cH_{AB}\simeq \mathbb{C}^d\otimes \mathbb{C}^d$. Such a composite Hilbert space can be written as the direct sum of the symmetric and antisymmetric subspaces: $\cH_{AB} = \cH_{\cS}\oplus\cH_{\cA}$, with the symmetric subspace $\cH_{\cS}=\mathbb{C}^d\vee\mathbb{C}^d$ and the antisymmetric subspace $\cH_{\cA}=\mathbb{C}^d\wedge\mathbb{C}^d$. The symbol $\oplus$ denotes direct sum, while $\vee$ and $\wedge$ denote  the symmetric and antisymmetric tensor product, respectively~\cite{bhatia:1997aa}. The symmetric and antisymmetric subspaces have dimensions $d_\cS := d(d+1)/2$ and $d_\cA := d(d-1)/2$, respectively.
	Consider the swap operator $V=V_{AB}$ which may be defined implicitly by its swapping action on every factorized state $\ket{\alpha}\ket{\beta}$: $V\ket{\alpha}\ket{\beta}=\ket{\beta}\ket{\alpha}$. It is worth noticing that, given the maximally entangled state
	\[
	\ket{\psi^+}=\frac{1}{\sqrt{d}}\sum_{i=1}^d \ket{i}\ket{i},
	\]
	for $\{\ket{i}\}$ a chosen local computational orthonormal basis, one has
	\[
	\dyad{\psi^+}^\Gamma = d V,
	\] 
	where partial transposition is taken in the computational basis.
	
	The projector onto the symmetric space is given by
	\begin{equation}
	\label{eq:Ps}
	P_\cS = \frac{\openone+V}{2},
	\end{equation}
	and the projector onto the antisymmetric space is given by
	\begin{equation}
	\label{eq:Pa}
	P_\cA = \frac{\openone-V}{2}.
	\end{equation}
	Such projectors are orthogonal and sum up to the identity operator. From their expressions \eqref{eq:Ps} and \eqref{eq:Pa}, it is immediate to derive the following relations for normalized single-system state vectors $\ket{\alpha}$ and $\ket{\beta}$:
	\begin{align}
	P_\cS \ket{\alpha}\ket{\beta} &= \frac{1}{2}\left(\ket{\alpha}\ket{\beta} + \ket{\beta}\ket{\alpha}\right) \\
	P_\cA \ket{\alpha}\ket{\beta} &= \frac{1}{2}\left(\ket{\alpha}\ket{\beta} - \ket{\beta}\ket{\alpha}\right) \label{eq:Paalphabeta}\\
	\bra{\alpha}\bra{\beta}P_\cS\ket{\alpha}\ket{\beta} &= \frac{1+|\braket{\alpha|\beta}|^2}{2}\\
	\bra{\alpha}\bra{\beta}P_\cA\ket{\alpha}\ket{\beta} &= \frac{1-|\braket{\alpha|\beta}|^2}{2} \label{eq:propPaalphabeta}.
	\end{align}
	
	\subsection{Werner states}
	\label{sec:Werner}
	
	Werner introduced a class of states that are invariant under $U\otimes U$ transformations~\cite{Werner:1989aa}. They correspond to convex combinations of the normalized projectors onto the symmetric and antisymmetric subspaces:
	\begin{equation}
	\rho_{\textrm{W}}(p) = p \frac{P_\cA}{d_\cA} + (1-p) \frac{P_\cS}{d_\cS}.
	\end{equation}
	It is well known that the Werner states are separable only when the probability $p$ is such that they are PPT, that is, for $0\leq p \leq 1/2$. This means that Werner states cannot be PPT bound entangled. On the other hand, it is an open question whether there is a range of values for $p$ such that entangled Werner states are undistillable even if non-positive under partial transposition (NPT)~\cite{DiVincenzo:2000aa,Shor:2001aa,Pankowski:2010aa}. Werner states, in particular the state $\rho_{\textrm{W}}(p=1)=P_\cA/d_\cA$, are characterized by very interesting properties, like their high degree of symmetry and the possibility of mapping non-trivially any state into a Werner state via LOCC, their not exhibiting Bell nonlocality~\cite{Werner:1989aa}, their high degree of shareability despite their degree of entanglement~\cite{Lancien:2016aa}, their implementing quantum data hiding~\cite{DiVincenzo:2002aa,DiVincenzo2003}.

	\section{Antisymmetric image of separable states}

	We provide two simple observations about the properties of projections of product states onto the antisymmetric subspace that follow directly by inspection from relations \eqref{eq:Paalphabeta} and \eqref{eq:propPaalphabeta}.
	
	\begin{observation}
		Let $\ket{\alpha}\ket{\beta}$ be a normalized product state. Then, either $\|P_\cA \ket{\alpha}\ket{\beta}\|=0$ in the case $|\braket{\alpha|\beta}|=1$, or $\ket{\psi}_{AB} = P_\cA \ket{\alpha}\ket{\beta}/\|P_\cA \ket{\alpha}\ket{\beta}\|$ is a normalized state with Schmidt rank equal to two.
	\end{observation}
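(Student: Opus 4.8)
The plan is to obtain the norm of the projected vector directly from the identities already recorded in Eqs.~\eqref{eq:Paalphabeta} and~\eqref{eq:propPaalphabeta}, and then to pin down the Schmidt rank of the renormalized vector by an explicit orthonormalization of $\{\ket{\alpha},\ket{\beta}\}$.

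First I would use that $P_\cA$ is an orthogonal projector, so that $\|P_\cA\ket{\alpha}\ket{\beta}\|^2 = \bra{\alpha}\bra{\beta}P_\cA^\dagger P_\cA\ket{\alpha}\ket{\beta} = \bra{\alpha}\bra{\beta}P_\cA\ket{\alpha}\ket{\beta} = (1-|\braket{\alpha|\beta}|^2)/2$ by Eq.~\eqref{eq:propPaalphabeta}. This vanishes exactly when $|\braket{\alpha|\beta}|=1$, i.e.\ when $\ket{\beta}$ equals $\ket{\alpha}$ up to a global phase, which disposes of the first alternative; in every other case the renormalization defining $\ket{\psi}_{AB}$ makes sense.

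Next, assuming $|\braket{\alpha|\beta}|<1$ — equivalently, that $\ket{\alpha}$ and $\ket{\beta}$ are linearly independent — I would expand $\ket{\beta} = c\ket{\alpha} + s\ket{\alpha^\perp}$ in an orthonormal basis $\{\ket{\alpha},\ket{\alpha^\perp}\}$ of their two-dimensional span, with $|c|^2+|s|^2=1$ and $s\neq 0$. Substituting into Eq.~\eqref{eq:Paalphabeta}, the terms proportional to $\ket{\alpha}\ket{\alpha}$ cancel and one is left with $P_\cA\ket{\alpha}\ket{\beta} = (s/2)\bigl(\ket{\alpha}\ket{\alpha^\perp}-\ket{\alpha^\perp}\ket{\alpha}\bigr)$; after normalization, $\ket{\psi}_{AB}$ equals $\bigl(\ket{\alpha}\ket{\alpha^\perp}-\ket{\alpha^\perp}\ket{\alpha}\bigr)/\sqrt{2}$ up to a global phase. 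This expression is manifestly already a Schmidt decomposition with two equal, nonzero weights, so the Schmidt rank is at most two. (Equivalently, $P_\cA\ket{\alpha}\ket{\beta}$ is, up to normalization, the decomposable antisymmetric tensor $\ket{\alpha}\wedge\ket{\beta}$ of two linearly independent vectors, which always has Schmidt rank two.)

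It then remains only to rule out Schmidt rank one. Here I would invoke the general fact that the antisymmetric subspace $\cH_\cA$ contains no nonzero product vector: if $\ket{a}\ket{b}$ were antisymmetric then $\ket{a}\ket{b}=-\ket{b}\ket{a}$, which forces $\ket{b}\propto\ket{a}$ and hence $\ket{a}\ket{a}=-\ket{a}\ket{a}$, a contradiction. Since $\ket{\psi}_{AB}\in\cH_\cA$ is nonzero, it cannot be a product vector, so its Schmidt rank is at least two, and combined with the previous bound it equals two. I do not anticipate a genuine obstacle in this argument; the only mild bookkeeping point is recording that a Schmidt-rank-one state is precisely a (normalized) product vector, which membership in the antisymmetric subspace excludes.
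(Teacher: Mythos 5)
Your proof is correct and follows essentially the same route as the paper, which simply asserts that the observation ``follows directly by inspection'' from Eqs.~\eqref{eq:Paalphabeta} and~\eqref{eq:propPaalphabeta}: your norm computation is exactly Eq.~\eqref{eq:propPaalphabeta}, and your orthonormal expansion of $\ket{\beta}$ makes explicit the Schmidt-rank-two structure that the paper leaves implicit (and that its Observation~2 and footnote also rely on). No gaps.
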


	\begin{observation}
		\label{obs:Paalphabeta}
		Let $\ket{\psi}_{AB} = P_\cA \ket{\alpha}\ket{\beta}/\|P_\cA \ket{\alpha}\ket{\beta}\|$, for $\ket{\alpha}\ket{\beta}$ a normalized product state with $|\braket{\alpha|\beta}|<1$. Then there is another normalized state $\ket{\alpha'}\ket{\beta'}$ such that $\ket{\psi}_{AB} = \sqrt{2} P_\cA \ket{\alpha'}\ket{\beta'}$, with $\|P_\cA \ket{\alpha'}\ket{\beta'}\|=1/\sqrt{2}$. In particular this is possible with the choice $\ket{\alpha'}=\ket{\alpha}$ and  $\ket{\beta'}=\ket{\alpha^\perp}=(\ket{\beta}-\braket{\alpha|\beta}\ket{\alpha})/\|\ket{\beta}-\braket{\alpha|\beta}\ket{\alpha}\|$.
	\end{observation}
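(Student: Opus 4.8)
The plan is to reduce $P_\cA\ket{\alpha}\ket{\beta}$ to a scalar multiple of $P_\cA\ket{\alpha}\ket{\alpha^\perp}$ by a Gram--Schmidt step, and then to fix the normalization using relations~\eqref{eq:Paalphabeta} and \eqref{eq:propPaalphabeta}.

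First I would decompose $\ket{\beta}=\braket{\alpha|\beta}\ket{\alpha}+c\,\ket{\alpha^\perp}$, where $\ket{\alpha^\perp}$ is the unit vector given in the statement and $c:=\|\,\ket{\beta}-\braket{\alpha|\beta}\ket{\alpha}\,\|=\sqrt{1-|\braket{\alpha|\beta}|^2}>0$, the strict positivity coming from the hypothesis $|\braket{\alpha|\beta}|<1$ (which is also exactly what makes $\ket{\alpha^\perp}$ well defined). Substituting this into relation~\eqref{eq:Paalphabeta} and antisymmetrizing, the contribution of the ``diagonal'' term $\braket{\alpha|\beta}\ket{\alpha}\ket{\alpha}$ cancels, leaving
\begin{equation}
P_\cA\ket{\alpha}\ket{\beta}=\frac{c}{2}\bigl(\ket{\alpha}\ket{\alpha^\perp}-\ket{\alpha^\perp}\ket{\alpha}\bigr)=c\,P_\cA\ket{\alpha}\ket{\alpha^\perp}.
\end{equation}

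Next I would extract the two relevant norms from relation~\eqref{eq:propPaalphabeta}. Taking $\ket{\alpha'}=\ket{\alpha}$ and $\ket{\beta'}=\ket{\alpha^\perp}$, the orthogonality $\braket{\alpha|\alpha^\perp}=0$ gives $\|P_\cA\ket{\alpha}\ket{\alpha^\perp}\|^2=1/2$, i.e.\ the claimed value $1/\sqrt2$; applying the same relation to $\ket{\alpha}\ket{\beta}$ gives $\|P_\cA\ket{\alpha}\ket{\beta}\|=\sqrt{1-|\braket{\alpha|\beta}|^2}/\sqrt2=c/\sqrt2$. Dividing the displayed identity by $\|P_\cA\ket{\alpha}\ket{\beta}\|=c/\sqrt2$ then yields
\begin{equation}
\ket{\psi}_{AB}=\frac{P_\cA\ket{\alpha}\ket{\beta}}{\|P_\cA\ket{\alpha}\ket{\beta}\|}=\frac{c\,P_\cA\ket{\alpha}\ket{\alpha^\perp}}{c/\sqrt2}=\sqrt2\,P_\cA\ket{\alpha}\ket{\alpha^\perp},
\end{equation}
which establishes the claim with the stated choice of $\ket{\alpha'},\ket{\beta'}$.

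There is no genuine obstacle here: the argument is essentially a one-line Gram--Schmidt computation together with the elementary norm formulas already recorded in the excerpt. The only points that deserve a word of care are verifying that $c\neq0$ under the hypothesis $|\braket{\alpha|\beta}|<1$ (so that $\ket{\alpha^\perp}$ exists and division by $\|P_\cA\ket{\alpha}\ket{\beta}\|$ is legitimate), and keeping the normalization factor $\sqrt2=1/\|P_\cA\ket{\alpha}\ket{\alpha^\perp}\|$ straight throughout.
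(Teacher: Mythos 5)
Your proof is correct and follows exactly the route the paper intends: the paper gives no explicit proof, stating only that the observation ``follows directly by inspection from relations \eqref{eq:Paalphabeta} and \eqref{eq:propPaalphabeta}'', and your Gram--Schmidt decomposition of $\ket{\beta}$ combined with those two relations is precisely that inspection, carried out carefully. The computations (the cancellation of the $\braket{\alpha|\beta}\ket{\alpha}\ket{\alpha}$ term, the norms $c/\sqrt{2}$ and $1/\sqrt{2}$, and the final normalization) all check out.
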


	Observation \ref{obs:Paalphabeta} leads to the following lemma.

	\begin{lemma}
		\label{lem:Paseparable}
		Let $\rho^{\textrm{sep}}$ be a separable state such that $\Tr(P_\cA \rho^{\textrm{sep}})>0$. Then there is a separable state $\rho'^{\textrm{sep}}$ such that $\Tr(P_\cA \rho'^{\textrm{sep}})=1/2$ and $\frac{P_\cA \rho^{\textrm{sep}} P_\cA}{\Tr(P_\cA \rho^{\textrm{sep}})}= \frac{P_\cA \rho'^{\textrm{sep}} P_\cA}{\Tr(P_\cA \rho'^{\textrm{sep}})}$.
	\end{lemma}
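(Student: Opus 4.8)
The plan is to unpack a separable decomposition of $\rho^{\textrm{sep}}$ and apply Observation~\ref{obs:Paalphabeta} term by term. Write $\rho^{\textrm{sep}} = \sum_k q_k \dyad{\alpha_k}\otimes\dyad{\beta_k}$ with $\{q_k\}$ a probability distribution and each $\ket{\alpha_k}\ket{\beta_k}$ a normalized product state. Conjugating by the projector gives
\[
P_\cA \rho^{\textrm{sep}} P_\cA = \sum_k q_k \big(P_\cA\ket{\alpha_k}\ket{\beta_k}\big)\big(\bra{\alpha_k}\bra{\beta_k}P_\cA\big),
\]
and, using \eqref{eq:propPaalphabeta}, $\Tr(P_\cA \rho^{\textrm{sep}}) = \sum_k q_k c_k$ with $c_k := \|P_\cA\ket{\alpha_k}\ket{\beta_k}\|^2 = (1-|\braket{\alpha_k|\beta_k}|^2)/2 \in [0,1/2]$. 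The hypothesis $\Tr(P_\cA\rho^{\textrm{sep}})>0$ means $\sum_k q_k c_k>0$, so some term has $c_k>0$; the terms with $c_k=0$ (equivalently $|\braket{\alpha_k|\beta_k}|=1$) annihilate under $P_\cA$ and may simply be dropped from the sum.

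For each remaining index $k$ (those with $c_k>0$) I would invoke Observation~\ref{obs:Paalphabeta} to pick a normalized product state $\ket{\alpha'_k}\ket{\beta'_k}$ with $\|P_\cA\ket{\alpha'_k}\ket{\beta'_k}\|^2 = 1/2$ and $P_\cA\ket{\alpha_k}\ket{\beta_k} = \sqrt{2c_k}\,P_\cA\ket{\alpha'_k}\ket{\beta'_k}$; concretely the choice $\ket{\alpha'_k}=\ket{\alpha_k}$, $\ket{\beta'_k}=\ket{\alpha_k^\perp}$ provided by the Observation does the job. Substituting into the expression above,
\[
P_\cA \rho^{\textrm{sep}} P_\cA = \sum_{k:\,c_k>0} 2\,q_k c_k\; P_\cA\big(\dyad{\alpha'_k}\otimes\dyad{\beta'_k}\big)P_\cA .
\]

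Finally I would define $\rho'^{\textrm{sep}} := \sum_{k:\,c_k>0} q'_k\, \dyad{\alpha'_k}\otimes\dyad{\beta'_k}$ with $q'_k := q_k c_k / \sum_{j:\,c_j>0} q_j c_j$. Since the denominator equals $\Tr(P_\cA\rho^{\textrm{sep}})>0$, the $q'_k$ form a genuine probability distribution, so $\rho'^{\textrm{sep}}$ is a well-defined separable state. Each $\ket{\alpha'_k}\ket{\beta'_k}$ satisfies $\bra{\alpha'_k}\bra{\beta'_k}P_\cA\ket{\alpha'_k}\ket{\beta'_k}=1/2$, hence $\Tr(P_\cA\rho'^{\textrm{sep}})=\tfrac12\sum_k q'_k = \tfrac12$. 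Moreover $P_\cA\rho'^{\textrm{sep}}P_\cA = \sum_k q'_k\, P_\cA(\dyad{\alpha'_k}\otimes\dyad{\beta'_k})P_\cA$, so the displayed formula for $P_\cA\rho^{\textrm{sep}}P_\cA$ reads $P_\cA\rho^{\textrm{sep}}P_\cA = 2\,\Tr(P_\cA\rho^{\textrm{sep}})\,P_\cA\rho'^{\textrm{sep}}P_\cA$; dividing through by $\Tr(P_\cA\rho^{\textrm{sep}})$ and using $\Tr(P_\cA\rho'^{\textrm{sep}})=1/2$ yields exactly the claimed equality of normalized antisymmetric projections. The only real care needed is bookkeeping — discarding the $c_k=0$ terms and checking the renormalization is legitimate — and both are guaranteed precisely by the positivity assumption; there is no substantive obstacle beyond that.
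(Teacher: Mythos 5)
Your proof is correct and follows essentially the same route as the paper's: decompose the separable state, drop the terms annihilated by $P_\cA$, replace each surviving product state via Observation~\ref{obs:Paalphabeta}, and reweight with probabilities proportional to $q_k\,\bra{\alpha_k}\bra{\beta_k}P_\cA\ket{\alpha_k}\ket{\beta_k}$. You simply spell out the bookkeeping that the paper leaves to the reader.
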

	\begin{proof}
		Let $\rho^{\textrm{sep}}=\sum_ip_i\dyad{\alpha_i}\otimes\dyad{\beta_i}$. To any term in the sum such that $p_i>0$ and $|\braket{\alpha_i|\beta_i}|<1$, associate a probability $p_i' = p_i \frac{\bra{\alpha_i}\bra{\beta_i}P_\cA \ket{\alpha_i}\ket{\beta_i}}{\Tr(\rho^{\textrm{sep}}P_\cA)}$ and local states $\ket{\alpha'_i}=\ket{\alpha_i}$, $\ket{\beta'_i}=(\ket{\beta}-\braket{\alpha|\beta}\ket{\alpha})/\|\ket{\beta}-\braket{\alpha|\beta}\ket{\alpha}\|$. Then the separable state $\rho'^{\textrm{sep}} = \sum_i p'_i \dyad{\alpha_i'}\otimes\dyad{\beta_i'}$ verifies the stated conditions, as it can be checked by the application of Observation~\ref{obs:Paalphabeta}.
	\end{proof}

	\section{A semidefinite-program approach to generate PPT entangled states}

	Let $\rho_\cA$ be a bipartite antisymmetric state, that is, fully supported in the antisymmetric subspace: $\rho_\cA=P_\cA \rho_\cA P_\cA$.
	
	We are interested in finding the largest probability of obtaining such a state from a PPT state by projecting onto the antisymmetric subspace, that is the following quantity, defined as the solution to an SDP:
	
	\begin{equation}
	\label{eq:SDP}
	\begin{aligned}
	p^\textrm{PPT}(\rho_\cA)
	 :=\quad&     	\underset{\sigma}{\text{max}}	& & \Tr(P_\cA\sigma) \\
	 & 	\text{s.t.} 			& &	P_\cA\sigma P_\cA = \Tr(P_\cA\sigma) \rho_\cA\\
	 & 								& &	\sigma \geq 0 \\
	 &								& &  \Tr(\sigma)=1 \\
	 &								& &  \sigma^\Gamma \geq 0.
	\end{aligned}
	\end{equation}
	
	We prove the following.
	
	\begin{theorem}
		\label{thm:bounds}
		It holds $2/(d(d+1)+2)\leq p^\textrm{PPT}(\rho_\cA) \leq 1/2$ for all antisymmetric states $\rho_\cA$. 
	\end{theorem}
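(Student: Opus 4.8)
The plan is to establish the two bounds independently. For the upper bound $p^{\textrm{PPT}}(\rho_\cA)\le 1/2$, I would exploit that the partial transpose of the swap operator is positive semidefinite: by the stated identity relating $\dyad{\psi^+}^\Gamma$ and $V$ (equivalently, by applying partial transposition to it), $V^\Gamma$ is a positive multiple of $\dyad{\psi^+}$, hence $V^\Gamma\ge 0$. Then, using $P_\cA=(\openone-V)/2$ together with the self-adjointness of partial transposition under the trace pairing, for any feasible $\sigma$ one has
\begin{equation*}
\Tr(P_\cA\sigma)=\tfrac12\bigl(1-\Tr(V\sigma)\bigr)=\tfrac12\bigl(1-\Tr(V^\Gamma\sigma^\Gamma)\bigr)\le\tfrac12,
\end{equation*}
since $\Tr(V^\Gamma\sigma^\Gamma)\ge 0$ as the trace of a product of the positive semidefinite operators $V^\Gamma$ and $\sigma^\Gamma$. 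Taking the supremum over feasible $\sigma$ yields the upper bound; this direction is essentially immediate.

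For the lower bound I would exhibit an explicit feasible point of the semidefinite program, namely the one-parameter family
\begin{equation*}
\sigma_q=q\,\rho_\cA+(1-q)\,\frac{P_\cS}{d_\cS},\qquad 0\le q\le 1.
\end{equation*}
Because $\rho_\cA=P_\cA\rho_\cA P_\cA$ and $P_\cA P_\cS=0$, one gets $P_\cA\sigma_q P_\cA=q\rho_\cA$ and $\Tr(P_\cA\sigma_q)=q$, so the affine constraint $P_\cA\sigma P_\cA=\Tr(P_\cA\sigma)\rho_\cA$ holds automatically, as do $\sigma_q\ge 0$ and $\Tr\sigma_q=1$; the only nontrivial condition to verify is $\sigma_q^\Gamma\ge 0$. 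Writing $\sigma_q^\Gamma=q\,\rho_\cA^\Gamma+\tfrac{1-q}{d_\cS}P_\cS^\Gamma$, I would bound the two summands from below: on the symmetric side $P_\cS^\Gamma=\tfrac12(\openone+V^\Gamma)\ge\tfrac12\openone$ since $V^\Gamma\ge 0$, and on the antisymmetric side $\rho_\cA^\Gamma\ge-\tfrac12\openone$ (justified below). Hence $\sigma_q^\Gamma\ge\bigl(\tfrac{1-q}{2d_\cS}-\tfrac q2\bigr)\openone$, which is positive semidefinite precisely when $q\le 1/(1+d_\cS)=2/(d(d+1)+2)$, using $d_\cS=d(d+1)/2$. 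Taking $q=2/(d(d+1)+2)$ makes $\sigma_q$ feasible with objective value $q$, and the lower bound follows. (Incidentally, $q=0$, i.e.\ $P_\cS/d_\cS$, already shows the feasible set is nonempty, so the maximum is attained.)

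The one nonroutine ingredient is the operator inequality $\rho_\cA^\Gamma\ge-\tfrac12\openone$, and I would prove the sharper statement that $\tau^\Gamma\ge-\tfrac12\openone$ for \emph{every} state $\tau$ on $\mathbb{C}^d\otimes\mathbb{C}^d$. By convexity of both sides in $\tau$ it suffices to take $\tau=\dyad{\phi}$ pure; writing $\ket{\phi}=\sum_j\sqrt{\mu_j}\ket{a_j}\ket{b_j}$ in Schmidt form, a direct computation shows that in the orthonormal product basis built from $\{\ket{a_j}\}$ and $\{\ket{b_j}\}$ the operator $\dyad{\phi}^\Gamma$ decomposes into $1\times1$ blocks equal to $\mu_j$ on $\ket{a_j}\ket{b_j}$ and $2\times2$ blocks $\sqrt{\mu_j\mu_{j'}}\,\bigl(\begin{smallmatrix}0&1\\1&0\end{smallmatrix}\bigr)$ on $\mathrm{span}\{\ket{a_j}\ket{b_{j'}},\ket{a_{j'}}\ket{b_j}\}$ for $j<j'$, plus a kernel. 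Its smallest eigenvalue is therefore $-\max_{j<j'}\sqrt{\mu_j\mu_{j'}}$ (or $0$), and $\sqrt{\mu_j\mu_{j'}}\le(\mu_j+\mu_{j'})/2\le1/2$ by the arithmetic--geometric mean inequality together with $\sum_l\mu_l=1$. I expect this spectral analysis of the pure-state partial transpose to be the only step requiring real care; everything else reduces to the projector identities \eqref{eq:Ps}--\eqref{eq:Pa} and elementary estimates.
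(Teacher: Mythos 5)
Your proposal is correct and follows essentially the same route as the paper: the same feasible family $q\,\rho_\cA+(1-q)P_\cS/d_\cS$ with the same threshold $q\le 1/(d_\cS+1)$ for the lower bound, and the same identity $\Tr(P_\cA\sigma)=\tfrac12(1-\Tr(V\sigma))$ combined with $V^\Gamma\propto\dyad{\psi^+}\ge 0$ for the upper bound. The only difference is that you spell out the Schmidt-basis spectral analysis showing $\dyad{\phi}^\Gamma\ge-\tfrac12\openone$, which the paper simply invokes as a known fact.
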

	\begin{proof}
		Let us start from the lower bound. For the given $\rho_\cA$, let us consider the family of states $\sigma(p) = p \rho_\cA + (1-p)P_\cS/d_\cS$. By construction, $\sigma(p)$ is a valid quantum state, and it holds that $P_\cA\sigma(p) P_\cA = p \rho_\cA$, with $\Tr(P_\cA\sigma(p)) =p$. We now want to find a $\bar{p}$ such that $\sigma(p)^\Gamma \geq 0$ for all $p\leq \bar{p}$. One has
		\[
		\sigma(p)^\Gamma = p \rho_\cA^\Gamma + (1-p) \frac{\openone+\dyad{\psi^+}}{2d_\cS}.
		\]
		Thus, one finds
		\[
		\begin{aligned}
		&\quad\min_{\ket{\phi}} \braket{\phi|\sigma(p)^\Gamma|\phi} \\
		&=\min_{\ket{\phi}}\{p \braket{\phi|\rho_\cA^\Gamma|\phi} + \frac{1-p}{2d_\cS}(1+d|\braket{\phi|\psi^+}|^2) \} \\
		&\geq \min_{\ket{\phi}}\{p \braket{\phi|\rho_\cA^\Gamma|\phi} +
			 \frac{1-p}{2d_\cS} \} \\
		&\geq \frac{1}{2}\left(-p +\frac{1-p}{d_\cS}\right),
		\end{aligned}
		\]
		where we have used that
		\[
		\begin{aligned}
		&\quad\min_{\ket{\phi}}\braket{\phi|\rho_\cA^\Gamma|\phi} \\
		&\geq\min_{\ket{\phi},\rho_\cA}\braket{\phi|\rho_\cA^\Gamma|\phi}\\
		&= \min_{\ket{\phi},\rho_\cA}\Tr\left(\dyad{\phi}^\Gamma\rho_\cA\right)\\
		&= -\frac{1}{2},
		\end{aligned}
		\]
		since the smallest eigenvalue of the partial transposition of a pure state is at most $-1/2$, and its corresponding eigenstate is antisymmetric.
		Imposing $\frac{1}{2}(-p +(1-p)/d_\cS)\geq 0$ one finds $p\leq 1/(d_\cS+1)=2/(d(d+1)+2)=:\bar{p}$.
		
		The upper bound can be found by considering that, for an arbitrary PPT state $\sigma$, that is, such that $\sigma^\Gamma\geq0$, one has
		\[
		\Tr(P_\cA \sigma) = \frac{1}{2}(1-\Tr(V\sigma)) = \frac{1}{2}(1-d\braket{\psi^+|\sigma^\Gamma|\psi^+}) \leq \frac{1}{2}.
		\]
	\end{proof}

	Now suppose that, for a given antisymmetric state $\rho_\cA$, we find $p^\textrm{PPT}(\rho_\cA) < 1/2$, and that the optimal PPT state achieving the value is $\sigma^*$. We argue that $\sigma^*$ is a PPT entangled state. Indeed, suppose that it was separable; then, Lemma \ref{lem:Paseparable} ensures that this would imply the existence of some other separable state, which is a fortiori PPT, that would also be projected onto $\rho_\cA$ with probability $1/2$. This is a contradiction, since we have assumed $p^\textrm{PPT}(\rho_\cA) < 1/2$.
	
	Thus, one can generate PPT entangled states through the following procedure:
	\begin{enumerate}
		\item take an arbitrary antisymmetric state $\rho_\cA$;
		\item compute $p^\textrm{PPT}(\rho_\cA)$ via the SDP \eqref{eq:SDP};
		\item if $p^\textrm{PPT}(\rho_\cA)<1/2$, then the optimal PPT state $\sigma^*$ that is such that $P_\cA\sigma^* P_\cA = p^\textrm{PPT}(\rho_\cA) \rho_\cA$ is a PPT entangled state.
	\end{enumerate}
	Notice that antisymmetric states $\rho_\cA$ can be generated at random, for example by generating a random bipartite state $\rho$, and considering $\rho_\cA = P_\cA\rho P_\cA / \Tr(P_\cA\rho)$. 	

	\section{Structure of PPT states that generate an antisymmetric state}

	In the previous section we obtained a lower bound to $p^\textrm{PPT}(\rho_\cA)$ through the use of the class of feasible solutions for the SDP \eqref{eq:SDP} given by $\sigma(p) = p \rho_\cA + (1-p)P_\cS/d_\cS=p \rho_\cA \oplus (1-p)P_\cS/d_\cS$, which we proved to be PPT states for $p$ small enough. We argue here that, among the $PPT$ states $\sigma^*$ that are optimal for the sake of the probability  $p^\textrm{PPT}(\rho_\cA)$ defined in \eqref{eq:SDP}, there are always states with the structure $\sigma^* = p^\textrm{PPT}(\rho_\cA) \rho_\cA \oplus (1-p^\textrm{PPT}(\rho_\cA)) \rho_\cS$, where $\rho_\cS$ is a state with support on $\cH_\cS$. Indeed, let $\sigma^*$ be a PPT state that is optimal for the sake of $p^\textrm{PPT}(\rho_\cA)$. One can then consider $\sigma'^* = (\sigma^* + V\sigma^*V)/2$, which by construction has the structure $\sigma'^*=P_\cA \sigma^* P_A \oplus P_\cS \sigma^* P_S $. Notice that $(V\tau V)^{\Gamma_A} = V\tau^{\Gamma_B}V$, so that $V\tau V$ is PPT if and only if $\tau$ is PPT. Hence, $\sigma'^*$ is PPT, because it is the convex combination of two PPT states, and clearly such that $P_\cA \sigma'^* P_A = P_\cA \sigma^* P_A=p^\textrm{PPT}(\rho_\cA) \rho_\cA$.
	
	\section{Analytical examples of PPT entangled states}
	\label{sec:analytic}
	
	We want to provide analytical examples of PPT entangled states that can be identified as such based on reasoning along the lines of the previous sections. The idea is to look at states of the form $\sigma = p \rho_\cA \oplus (1-p) \rho_\cS$ for some simple choice of parameter $p$ and of states $\rho_\cA$ and $\rho_\cS$ that make the state $\sigma$ certifiably PPT entangled. We are going to choose $\rho_\cS = P_\cS / d_\cS$. From the proof of Theorem \ref{thm:bounds}, we know already that, as long as $p\leq 1/(d_\cS+1)$, $\sigma$ is going to be PPT. We only need to find a simple condition on $\rho_\cA$ that ensures that $\sigma$ is entangled. We can find such a condition invoking Lemma \ref{lem:Paseparable}, which implies that any antisymmetric pure state that originates from the projection onto the antisymmetric subspace of a pure factorized state has at most Schmidt rank equal to two. In general, this means that any separable state will be mapped onto antisymmetric mixed states of Schmidt number at most equal to two~\footnote{Notice that all antisymmetric mixed states have at least Schmidt number two, because the antisymmetric subspace does not contain product states, as it can be verified by using (\ref{eq:propPaalphabeta}).}. We conclude that, as soon as $\rho_\cA$ has Schmidt number strictly greater than two, and for any $p>0$, the state $p \rho_\cA \oplus (1-p) \rho_\cS$ is entangled. Notice that this does not contradict the fact that there are Werner states that are separable. In the case of Werner states, one has $\rho_\cA = P_\cA/d_\cA$, and the latter antisymmetric state, proportional to the projector onto the antisymmetric space, has Schmidt number equal to two, and can be obtained with probability $1/2$ from a separable state.
	
	The simplest way to make sure that $\rho_\cA$ has Schmidt number strictly larger than two is to choose $\rho_\cA = \dyad{\psi_\cA}$, for $\ket{\psi_\cA}$ an antisymmetric vector state with Schmidt rank strictly larger than two.
	
	We remark that generic random antisymmetric vector states in dimension $d=2m$ have Schmidt rank $2m$, and can in principle be generated (up to normalization) starting from a generic vector states without a definite symmetry, and projecting onto the antisymmetric space.
	
	An analytical, non-random constructions can be easily put forward. For example, in even dimensions $d=2m$, one can consider the antisymmetric vector states
    \begin{equation}
    \label{eq:psiAmultilevel}
    \ket{\psi_\cA} = \sum_{i=1}^{m} c_i \ket{\psi^-_{2i-1,2i}},\quad \sum_{i=1}^m |c_i|^2 =1,
    \end{equation}
    where
    \[
    \ket{\psi^-_{k,l}} = \frac{1}{\sqrt{2}}(\ket{k}\ket{l}-\ket{l}\ket{k}),
    \]
    for $k,l=1,2,\ldots,d$ and $k<l$.
    The vector state $\ket{\psi_\cA}$ in Eq.~\eqref{eq:psiAmultilevel} has Schmidt rank equal to twice the number of non-zero amplitudes $c_i$. One may consider of particular interest the case where $\ket{\psi_\cA}$ in Eq.~\eqref{eq:psiAmultilevel} is maximally entangled between that two $2m$-dimensional systems, that is, where the coefficients satisfy $|c_i|=1/\sqrt{m}$ so that all its Schmidt coefficients are equal to $1/\sqrt{d}=1/\sqrt{2m}$.

	\section{Conclusions}
	
	We constructed simple analytical examples of entangled states that remain positive under partial transposition (PPT entangled states). Our construction also allows to generate numerical (random) examples of PPT entangled states.
	
	Our construction exploits some specific properties of the projectors onto the symmetric and antisymmetric subspace, and in many ways one can consider the states that we put forward as modifications or generalizations of the well-known Werner states. It is worth emphasizing that the construction of Section \ref{sec:analytic} suggests that, when considering examples of noisy entangled states in the context of quantum effects and quantum protocols, it might be interesting to consider a two-parameter family of states
	\begin{equation}
	\label{eq:twoparameterfamily}
	(1-p_\cA-p_\cS)\dyad{\psi_\cA} + p_\cA \frac{P_\cA}{d_\cA} + p_\cS \frac{P_\cS}{d_\cS},
	\end{equation}
	with $p_\cA\geq0$, $p_\cS\geq0$, $p_\cA + p_\cS \leq 1$, and $\ket{\psi_\cA}$ an antisymmetric state vector with Schmidt rank strictly larger than two, so that this two-parameter family comprises all Werner states and also PPT entangled states. Notice that, in even dimensions, if $\ket{\psi_\cA}$ is chosen to be maximally entangled, the family of states \eqref{eq:twoparameterfamily} comprises both Werner states and isotropic states (up to local unitaries).
	
	Future work that takes into account the key properties of the symmetric and antisymmetric subspaces that we have made use of may lead to further generalizations.
	
	It is worth noticing that, thanks to the Choi-Jamio{\l}kowski isomorphism~\cite{Choi:1975aa,Jamiokowski:1972aa}, our construction identifies classes of PPT-binding~\cite{Horodecki:2000aa} but not entanglement-breaking~\cite{Horodecki:2003aa} channels, which might be useful to study superactivation effects in quantum information processing~\cite{Smith1812}.

	\begin{acknowledgments}
	We thank P\'{a}l K\'{a}roly Ferenc for correspondence and for pointing out typos in the preprint version of our manuscript.
	We acknowledge support from European Union's Horizon 2020 Research and Innovation Programme under the Marie Sk{\l}odowska-Curie Action OPERACQC (Grant Agreement No. 661338).
	\end{acknowledgments}

\end{document}